\newcommand{\intervalleff}[2]{\mathopen{[}#1\,;#2\mathclose{]}}
\newcommand\keywords{}
\definecolor{lightblue}{rgb}{0.8,0.9,1} 
\definecolor{lightred}{rgb}{1,0.5,0.4} 
\newtheorem{Definition}{Definition}
\newtheorem{Thm}{Theorem}
\newtheorem{Lem}{Lemma}
\newtheorem{Coro}{Corollary}
\newenvironment{proof}{\textbf{Proof}}{\begin{FlushRight}
                                        $\square$
                                       \end{FlushRight}}
\begin{document}
\title{Optimality Clue for Graph Coloring Problem}
%
%
\author{Alexandre Gondran\\ENAC, French Civil Aviation University, Toulouse, France\\\texttt{alexandre.gondran@enac.fr}
\and
Laurent Moalic\\UHA,  University of Upper Alsace, Mulhouse, France\\\texttt{laurent.moalic@uha.fr}
}
%

%
%
\date{}
\maketitle              
\begin{abstract}\small\baselineskip=9pt
In this paper, we present a new approach which qualifies or not a solution found by a heuristic as a potential optimal solution.
Our approach is based on the following observation: for a minimization problem, the number of admissible solutions decreases with the value of the objective function.
For the Graph Coloring Problem (GCP), we confirm this observation and present a new way to prove optimality. 
This proof is based on the counting of the number of different $k$-colorings and the number of independent sets of a given graph $G$.

Exact solutions counting problems are difficult problems (\string#P-complete).
However, we show that, using only randomized heuristics, it is possible to define an estimation of the upper bound of the number of $k$-colorings. 
This estimate has been calibrated on a large benchmark of graph instances for which the exact number of optimal $k$-colorings is known.


Our approach, called \emph{optimality clue}, build a sample of $k$-colorings of a given graph by running many times one randomized heuristic on the same graph instance.
We use the evolutionary algorithm \emph{HEAD}~\cite{Moalic2018}, which is one of the most efficient heuristic for GCP. 

\emph{Optimality clue} matches with the standard definition of optimality on a wide number of instances of DIMACS and RBCII benchmarks where the optimality is known. 
Then, we show the clue of optimality for another set of graph instances.

\keywords{Optimality  \and Metaheuristics \and Near-optimal.}

\end{abstract}


\section{Introduction}



For a given integer $k\geq1$, a $k$-coloring  of a given graph $G=(V,E)$ is an assignment of one of $k$ distinct colors to each vertex $v\in V$ in the graph, so that no two adjacent vertices (linked by an edge $e\in E$) are given the same color. 
The Graph Coloring Problem (GCP) is to find, for a given graph $G$, its \textit{chromatic number} $\chi(G)$ corresponding to the smallest $k$ such that there exists a $k$-coloring of $G$. 
GCP is NP-hard~\cite{Karp72} for $k\geq 3$. The $k$-coloring problem ($k$-CP) is the associated decision problem.
For an optimization problem which is NP-hard, there is no efficient polynomial-time exact algorithm to solve it, unless P$=$NP.
Therefore for large size instances of a minimization NP-hard problem, the exact algorithms must be stopped before their end.
In this case, exact algorithms such as branch and bound methods find a lower bound of the optimal value of the objective function.
Heuristic approaches are then the only ways to find, in reasonably fast running-time, a ``good'' solution in terms of objective function value, i.e. an upper bound of the optimal value. 
However, even if an admissible solution is found, its distance to the optimal solution remains unknown, except for approximation algorithms\footnote{Notice that it is still NP-hard to approximate $\chi(G)$ within $n^{1-\epsilon}$ for any $\epsilon > 0$~\cite{Zuckerman2007}.}. 
The optimality gap is the different between the upper bound (found by a heuristic) and the lower bound (found by a partial exact method). 
Optimality is proven only when this gap is equal to zero. 
Unfortunately for large size instances of an NP-hard problem, this gap is often important. 
It is particularly true for challenging instances~\cite{Held2012,Moalic2018} of the GCP of the DIMACS benchmark~\cite{dimacs96}.
This paper addresses the following question: 
What to do in this situation?
Is it possible to prove optimality of a graph coloring problem instance 
using only heuristic algorithms?

The response is \emph{Yes}, for specific class of graphs: for example, it exists efficient polynomial-time exact algorithms to find $\chi(G)$ for
 \emph{interval graphs}, \emph{chordal graphs}, \emph{cographs}~\cite{Orlin1981,Shih1989}.
For some graphs like \emph{1-perfect graphs}\footnote{A perfect graph is a graph in which the chromatic number of every induced subgraph equals the size of the largest clique of that subgraph. 1-perfect graphs are more general than perfect graphs. There exists polynomial-time exact algorithms to find $\chi(G)$ for perfect graphs~\cite{Grotschel1984}, but slow in practice.  Line graphs, chordal graphs, interval graphs or cographs are subclasses of perfect graphs.}, for which the chromatic number $\chi(G)$ is equal to the size of the maximum cliques $\gamma(G)$, 
it is possible to solve the dual problem, the Maximum Clique Problem (MCP), with another heuristic and conclude to optimality if 
the size of the maximum clique found is equal to the smallest number of colors used for coloring $G$ found also by a heuristic. 
In this specific case, the optimality gap (or duality gap between GCP and MCP) is zero.

However, the response to the question is \emph{No}, in general case; a heuristic finds approximate solutions (upper bound); although the coloring found may be optimal, it is not possible to prove this possible optimality. 
Therefore, the question become: what can be done better using only a heuristic than finding an approximate solution?
Is it possible to define a kind of optimality index for a graph coloring problem instance?

One shows in this article that a heuristic does not only find an upper bound of $\chi(G)$ but that it is also able to count the number of different $k$-colorings (i.e. the number of admissible solutions having the same objective function value).
Our approach is based on the fact that the number of different $k$-colorings decreases dramatically when the number of colors, $k$, decreases too.
Indeed figure~\ref{fig:nbsol} gives a typical example of a random graph with 30 vertices, a density of $0.9$ and $\chi(G)=16$. 
The number of colorings with exactly $k$ colors (blue bars) and the total colorings with $k$ colors or less (red bars), noted $\mathcal{N}(G,k)$, are exactly computed for all values from $k=16$ to $k=30$. $\mathcal{N}(G,k)$ decreases exponentially when $k$ decreases to $\chi(G)$.
One proves a theorem showing that when the number of $k$-colorings is lower than a given value (the number of independent sets of $G$~\footnote{An independent set is a subset of vertices of $G$, such that every two distinct vertices in the independent set are no adjacent.}), then we achieve the optimum: $\chi(G)=k$.

\begin{figure}
 \begin{minipage}[b]{.48\linewidth}
  \centering\epsfig{figure=nbsol.pdf,width=\linewidth}
  \caption{\label{fig:nbsol}Number of colorings with exactly $k$ colors (blue bars) and number of total colorings with $k$ colors or less (red bars), noted $\mathcal{N}(G,k)$ in function of $k=16...30$, for a random graph with 30 vertices, density $0.9$ and $\chi(G)=16$.}
 \end{minipage} \hfill
 \begin{minipage}[b]{.48\linewidth}
  \centering\epsfig{figure=permutation.pdf,width=0.6\linewidth}
\caption{\label{fig:permutation}Two $3$-colorings $c_1$ and $c_2$ of the same graph with four vertices. These two colorings have to be considered identical because $d(c_1,c_2)=0$ with $d$ the set-theoretic partition distance; we pass from one to the other, just by a permutation of color classes.}
 \end{minipage}
\end{figure}



In this article, we try to apply the proposed theorem in order to prove optimality. 
\subsubsection*{Brief solutions counting review}
Our work tackles the problem of counting solutions of NP-complete problems which has been widely studied for boolean SATisfiability problem, called \string#SAT, or Constraint Satisfaction Problem (CSP), called \string#CSP; $k$-coloring problem is a special case of CSP. These problems are known as \string#P-complete~\cite{Valiant1979}.
A recent survey on \string#CSPs is done in~\cite{Jerrum2017}. 
Even if a problem is not NP-hard, the problem of solutions counting is often hard. Specific studies on counting solutions of $k$-CP are done in~\cite{Jerrum1995,Favier2011,Miracle2016}. Because the exact counting is in many cases a complex problem, statistical or approximate counting are often considered. Then, uniform sampling of the set of solutions problem is related to the problem of counting solutions. 
Many works are done on uniform or near uniform sampling like~\cite{Gomes2007,Gomes2009,Wei2005}. The objective is to count by sampling.
Frieze and Vigoda~\cite{Frieze2007} give a survey on the use of Markov Chain Monte Carlo algorithms for approximately counting the number of $k$-colorings. The features of ergodicity or quasi-ergodicity of the heuristics that guarantee an uniform sampling are deeply discussed in~\cite{Ermon2012}. However, theoretical results are obtained with a high value of $k\geq\Delta$ where $\Delta$ is the maximum degree of the graph $G$ which is very far from $\chi(G)$ for challenging graphs. 
On the other hand, when tests are performed with $k=\chi(G)$ like in~\cite{Favier2011}, the considered graph instances are often with more than $10^{20}$ $k$-colorings. 
If the number of $k$-colorings is too high (higher than the number of independent sets), then it is not possible to apply our theorem. 
Therefore, in practice, our approach can be applied to graphs that do not have too many optimal colorings; we considered graphs with at most 1 million different \emph{optimal} colorings.

To our knowledge, it is the first time that solutions counting are used to prove optimality.
We define a procedure, called \emph{optimality clue}, in order to apply the proposed theorem.
First, we build a sample of $k$-colorings of a given graph $G$ by running many times (about 1,000 times) the same randomized heuristic algorithm. 
In this study, we use \emph{HEAD}\footnote{Open-source code available at: github.com/graphcoloring/HEAD}, 
our open-source memetic algorithm (i.e. hybridization of tabu search and evolutionary algorithm),
which is very efficient heuristic solving GCP~\cite{Moalic2018}. 

In this sample some colorings may appear several times and others only ones. 
The number of different $k$-colorings inside the sample is used to build an estimation of the total number of colorings with $k$ colors. 
This estimator has been calibrated on a large benchmark of graph instances for which the number of optimal $k$-colorings is exactly known.
Because we have no guarantee that the sampling is uniform, in the general case, therefore we have no guarantee that our estimator is always exact.

Moreover, building a sample of $k$-colorings is time-consuming, then the size of the sample should be ``reasonable''.
Therefore, graphs for which our optimality clue can be calculated are graphs having not too many optimal $k$-colorings (i.e. about less than one million). 
Of course it is not possible to known a priori if a given graph has more or less than 1 million optimal colorings.
Then, our approach provides a clue that a coloring found by the heuristic is perhaps optimal (positive conclusion) but never denies it (no negative conclusion): 
in many cases we can not have any conclusion.

This article is organized as follows. In Section~\ref{sec:proof}, we present the new optimality proof for GCP based on solutions counting. 
Our general approach, called \emph{optimality clue}, is define in Section~\ref{sec:clue}. 
In Section~\ref{sec:ub}, we detail how we calculate the estimate of the number of $k$-colorings using benchmark graph instances.  
Numerical tests and experiments are presented in Section~\ref{sec:test}.
Finally, we conclude in Section~\ref{sec:conclusion}. 

\section{Proof of Optimality by solutions counting}\label{sec:proof}

Notice that there are different ways to count the $k$-colorings of a given graph $G$.
When counting the number of different $k$-colorings, we have to take into account the permutations of the color classes. We consider one $k$-coloring not as an assignment of one color among $k$ to each vertex but as a partition of the vertices of the graph into $k$ independent sets. 
An Independent Set (IS) or stable set is a set of vertices of $G$, no two of which are adjacent.
Two $k$-colorings $c_1$ and $c_2$ are considered identical if they correspond to the same partition of $G$. The distance between two $k$-colorings that is taken into account is  the \textit{set-theoretic partition distance} used in~\cite{Galinier99,Gusfield2002,Moalic2018}, which is independent of the permutation of the color classes.
In previous works about solutions counting of $k$-CP~\cite{Favier2011}, authors counted
the total number of $k$-colorings including all the permutations like in the example of Figure~\ref{fig:permutation}; such a calculation of the number of different $k$-colorings is $k!$ times higher than the way we count. 
This makes their methods inapplicable to our study.
We write $\Omega(G,k)$ the set of all $k$-colorings of the graph $G$. A $k$-coloring can use exactly $k$ colors or less, then $\Omega(G,k-1)\subset \Omega(G,k)$.
The cardinal of $\Omega(G,k)$ is noted $\mathcal{N}(G,k)=|\Omega(G,k)|$.



Our approach is based on the following fact~:

\begin{Lem}\label{lemme:iG}
Let a graph $G$ and an integer $k\geq1$. 
If there exists at least one $k$-coloring of $G$, then there exists at least $i(G)-k+1$ different $(k+1)$-colorings of $G$: $$\mathcal{N}(G,k+1)\geq\mathit{i}(G)-k+1,$$ where $i(G)$ is the number of independent sets of $G$. 
\end{Lem}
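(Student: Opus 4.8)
The plan is to turn each independent set of $G$ into a $(k+1)$-coloring and to count how many distinct colorings this produces, so that the inequality reduces to a near-injectivity statement.

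First I would fix one $k$-coloring $c=\{V_1,\dots,V_k\}$ of $G$, viewed as a partition of $V$ into $k$ independent sets (we may assume $c$ has exactly $k$ nonempty classes, adapting the argument otherwise). To an independent set $S$ of $G$ I would associate
\[
 c_S=\bigl\{\,S,\ V_1\setminus S,\ V_2\setminus S,\ \dots,\ V_k\setminus S\,\bigr\},
\]
the partition obtained by ``promoting'' $S$ to a fresh color and deleting its vertices from the old classes. Since each $V_i\setminus S$ is a subset of the independent set $V_i$, since $S$ is independent, and since these parts are pairwise disjoint and cover $V$, the family $c_S$ is a partition of $V$ into at most $k+1$ independent sets, i.e.\ $c_S\in\Omega(G,k+1)$. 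This defines a map $\Phi\colon S\mapsto c_S$ from the independent sets of $G$ into $\Omega(G,k+1)$.

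Next I would examine the fibres of $\Phi$. Taking $S=V_i$ gives $V_i\setminus V_i=\emptyset$ and $V_j\setminus V_i=V_j$ for $j\neq i$, so $c_{V_i}=c$ for every $i$: the $k$ independent sets $V_1,\dots,V_k$ are all sent to the single coloring $c$. The plan is then to show that, apart from this collapse, $\Phi$ is injective — the intuition being that $S$ can be read off from the partition $c_S$ once $c$ is known, since every block of $c_S$ other than $S$ sits inside a definite color class of $c$. Granting this, the $i(G)$ independent sets of $G$ are mapped by $\Phi$ onto at least $i(G)-(k-1)=i(G)-k+1$ distinct elements of $\Omega(G,k+1)$, which is exactly $\mathcal{N}(G,k+1)\ge i(G)-k+1$.

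The step I expect to be the main obstacle is precisely this injectivity. The delicate situation is when $S$ is itself contained in some color class $V_i$: then the two blocks $S$ and $V_i\setminus S$ of $c_S$ are interchangeable, so separating an independent set from its $V_i$-complement — and, more generally, excluding unexpected coincidences $c_S=c_{S'}$ with $S\neq S'$ — needs a careful case analysis (for example by fixing, inside each $V_i$, a canonical representative of each pair $\{S,V_i\setminus S\}$, or by running $\Phi$ only on a suitably chosen subfamily of independent sets). By contrast, the facts that $c_S$ is always a legitimate $(k+1)$-coloring and that the only ``cheap'' coincidences are $V_1,\dots,V_k\mapsto c$ are straightforward to check.
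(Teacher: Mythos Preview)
Your construction is exactly the one the paper uses: fix a $k$-coloring $c=\{V_1,\dots,V_k\}$ and, for each independent set $S$, recolor $S$ with a fresh $(k+1)$-th color to get $c_S$. The paper then simply asserts that distinct $S\notin\{V_1,\dots,V_k\}$ give distinct $(k+1)$-colorings, adds back the starting coloring $c$, and arrives at $i(G)-k+1$. So your plan and the paper's proof coincide; the only difference is that you pause at the injectivity step while the paper does not.

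You are right to flag that step, and unfortunately the obstacle is real: the map $\Phi$ is \emph{not} injective off $\{V_1,\dots,V_k\}$, and no ``careful case analysis'' will rescue the stated bound. Whenever $\emptyset\neq S\subsetneq V_i$, the partitions $c_S$ and $c_{V_i\setminus S}$ are identical, yet neither $S$ nor $V_i\setminus S$ is a color class of $c$; and $\Phi(\emptyset)=c$ gives yet another extra collision. Concretely, take $G=P_3$ (the path $a\!-\!b\!-\!c$) with $k=2$ and $c=\{\{a,c\},\{b\}\}$: then $\Phi(\{a\})=\Phi(\{c\})=\{\{a\},\{b\},\{c\}\}$ and $\Phi(\emptyset)=\Phi(\{b\})=\Phi(\{a,c\})=c$, so five independent sets map to only two colorings. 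Here $\mathcal{N}(P_3,3)=2$ while $i(G)-k+1=5-2+1=4$, so the inequality in the lemma actually fails. Your suggested fix of choosing a canonical representative from each pair $\{S,V_i\setminus S\}$ only shrinks the image further; it cannot restore the count to $i(G)-k+1$. In short, the gap you singled out is a genuine defect of the argument (and of the lemma as stated), which the paper's proof shares but leaves unmentioned.
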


\begin{proof}
Notice that a $k$-coloring of a graph $G=(V,E)$ is a partition of $|V|$ vertices into $k$ IS.
Indeed vertices colored with the same color inside a $k$-coloring are necessarily an IS.
In other words, it is always possible to color all vertices of any IS with the same color.
We note $IS(G)=\{U\subset V\ |\ \forall x,y\in U^2,\ \{x,y\}\notin E\}$ the set of all the IS of $G$, then $i(G)=|IS(G)|$.

Starting with one coloring of $G$ with exactly $k$ colors, 
for each independent set of $G$ except for the $k$ IS of the $k$-coloring, 
it is possible to recolor all vertices of this independent set with a new color (the $(k+1)$th color). 
We obtain by this way one different $(k+1)$-coloring for each different independent set, then we count at least a total of $\mathit{i}(G)-k$ different colorings with exactly $(k+1)$ colors. Then, $\mathcal{N}(G,k+1)\geq\mathit{i}(G)-k+1$ because we have to count also the starting $k$-coloring. 

\end{proof}

Then, we obtain the following theorem:

\begin{Thm}
\label{thm:opt}
Let a graph $G$ and an integer $k\geq1$. 
Let $\mathcal{N}(G,k)$ the number of $k$-colorings of $G$ and $\mathit{i}(G)$ the number of independent sets of $G$. 

If $\mathit{i}(G)-k>\mathcal{N}(G,k)>0$, then $\chi(G)=k$.
\end{Thm}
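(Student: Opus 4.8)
The plan is to obtain Theorem~\ref{thm:opt} as an almost immediate consequence of Lemma~\ref{lemme:iG}, by squeezing $\chi(G)$ between $k$ from above and $k$ from below.

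First I would note that $\mathcal{N}(G,k) > 0$ means $G$ admits at least one proper $k$-coloring, so $\chi(G) \le k$; the whole content is therefore to establish the reverse inequality $\chi(G) \ge k$, i.e. that $G$ has no proper $(k-1)$-coloring.

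Then I would argue by contradiction. Assume $G$ admits a $(k-1)$-coloring, that is $\mathcal{N}(G,k-1) \ge 1$. Applying Lemma~\ref{lemme:iG} with $k$ replaced by $k-1$ yields $\mathcal{N}(G,k) \ge i(G) - (k-1) + 1 = i(G) - k + 2$. On the other hand, the hypothesis of the theorem gives $\mathcal{N}(G,k) < i(G) - k < i(G) - k + 2$, which contradicts the bound just obtained. Hence $G$ has no $(k-1)$-coloring, so $\chi(G) \ge k$, and combining this with $\chi(G) \le k$ we conclude $\chi(G) = k$.

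The only point that needs any care at all is that the index shift $k \mapsto k-1$ in the lemma remains legitimate, in particular at the boundary $k = 1$: there a ``$0$-coloring'' would force $V = \emptyset$, and one checks directly that the empty graph (for which $i(G) = 1$ and $\mathcal{N}(G,1) = 1$) does not satisfy the hypothesis $i(G) - k > \mathcal{N}(G,k)$, so this degenerate case never arises under the theorem's assumptions. Beyond that, I do not expect any real obstacle: the statement is essentially a corollary of Lemma~\ref{lemme:iG} together with one line of arithmetic.
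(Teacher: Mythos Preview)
Your proposal is correct and follows essentially the same route as the paper: deduce $\chi(G)\le k$ from $\mathcal{N}(G,k)>0$, then assume a $(k-1)$-coloring exists and apply Lemma~\ref{lemme:iG} (shifted by one) to force $\mathcal{N}(G,k)\ge i(G)-k+2$, contradicting the hypothesis. Your bookkeeping with the constant is in fact slightly tighter than the paper's own writeup, and your boundary check at $k=1$ is an extra nicety the paper omits.
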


\begin{proof}
 $\chi(G)\leq k$ because $\mathcal{N}(G,k)>0$. If $\chi(G)<k$, it means that there exists at least one $(k-1)$-coloring (i.e. $\mathcal{N}(G,k-1)>0$). If we add a new color, it is possible to consider this $(k-1)$-coloring and to recolor any independent set of $G$ with the new color, we obtain by this way $\mathit{i}(G)-k$ different $k$-colorings (by Lemma~\ref{lemme:iG}). Therefore $\mathit{i}(G)-k\leq\mathcal{N}(G,k)$ which refute initial assumption. 
\end{proof}

For example, the studied graph in Figure~\ref{fig:nbsol} (30 vertices and density 0.9) has 38 different colorings with 16 colors: $\mathcal{N}(G,k=16)=38$; moreover this graph has 78 IS: $i(G)=78$, then the theorem is applicable with $k=16$ because: $\mathit{i}(G)-k=78-16=62>38=\mathcal{N}(G,k)>0$. Then, thanks to the theorem we can conclude that $\chi(G)=16$. Moreover, for $k=17$, $\mathcal{N}(G,k=17)=3121 > i(G)-k=61$, so the theorem is not applicable.

\begin{Coro}\label{coro:opt}
Let a graph $G$ and an integer $k\geq1$. 
Let $\overline{\mathcal{N}}(G,k)$ an upper bound of the number of $\mathcal{N}(G,k)$ and $\underline{\mathit{i}}(G)$ a lower bound of $\mathit{i}(G)$. 

If $\mathcal{N}(G,k)>0$ and $\underline{\mathit{i}}(G)-k>\overline{\mathcal{N}}(G,k)$, then $\chi(G)=k$.
\end{Coro}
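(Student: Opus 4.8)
The plan is to reduce the statement directly to Theorem~\ref{thm:opt}, so that the corollary becomes a purely order-theoretic observation. First I would unwind the two bounding hypotheses: by definition of an upper bound, $\mathcal{N}(G,k)\leq\overline{\mathcal{N}}(G,k)$, and by definition of a lower bound, $\underline{\mathit{i}}(G)\leq\mathit{i}(G)$, hence $\underline{\mathit{i}}(G)-k\leq\mathit{i}(G)-k$.

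Next I would chain these with the strict inequality assumed in the hypothesis, $\underline{\mathit{i}}(G)-k>\overline{\mathcal{N}}(G,k)$, to obtain
$$\mathcal{N}(G,k)\leq\overline{\mathcal{N}}(G,k)<\underline{\mathit{i}}(G)-k\leq\mathit{i}(G)-k,$$
so that $\mathit{i}(G)-k>\mathcal{N}(G,k)$. Combined with the remaining hypothesis $\mathcal{N}(G,k)>0$, this is exactly the premise $\mathit{i}(G)-k>\mathcal{N}(G,k)>0$ of Theorem~\ref{thm:opt}.

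Finally, invoking Theorem~\ref{thm:opt} gives $\chi(G)=k$, which finishes the argument. I do not expect any real obstacle: the corollary merely propagates the conclusion of Theorem~\ref{thm:opt} through a lower bound on $\mathit{i}(G)$ and an upper bound on $\mathcal{N}(G,k)$. The only point requiring a bit of care is keeping the inequality directions consistent — an \emph{upper} bound is needed on the number of colorings and a \emph{lower} bound on the number of independent sets — which is precisely what makes the displayed chain close in the right direction.
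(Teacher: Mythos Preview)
Your argument is correct and is precisely the intended one: the paper does not spell out a proof of the corollary, treating it as an immediate consequence of Theorem~\ref{thm:opt}, and your chain of inequalities is exactly the obvious reduction the paper leaves implicit.
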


\section{Optimality Clue}\label{sec:clue}
We propose in this paper to apply the corollary~\ref{coro:opt}, 
so to find an appropriate upper bound of the number of $k$-colorings of $G$, 
$\overline{\mathcal{N}}(G,k)$, and a lower bound of the number of independent sets of $G$, 
$\underline{\mathit{i}}(G)$. 
\subsection{IS counting}
There exists many algorithms~\cite{Bron1973,Carraghan1990,Ostergard2002,Samotij2015} for counting all the maximal independent sets of a graph $G$ (or similarly counting all the maximal cliques\footnote{A maximal clique is a clique that cannot be extended by including one more adjacent vertex. 
A maximum clique is a clique that has the largest size in a given graph; a maximum clique is therefore always maximal, but the converse does not hold. Analogue definition for IS.} in $\overline{G}$, the complementary graph of $G$). By definition, the number of maximal IS, noted $i_{max}(G)$, is a lower bound of $i(G)$. Those algorithms are based on enumeration. 
Because we focus this study on graphs having less than 1 million optimal solutions, we can stop the enumerating after finding 1 million IS.
Generally, $\mathit{i}(G)$ is very high except for graphs with very high density. 
Real-life graphs have often a low density, then $\mathit{i}(G)$ is very high. Moreover, a simple lower bound is given by~\cite{Pedersen2006}~: $\mathit{i}(G)\geq2^{\alpha(G)}+n-\alpha(G)$, where $\alpha(G)$ is the size of the largest independent set of $G$ and $n$ the number of vertices. 
Bollob\'{a}s' book~\cite{Bollobas2001}~(p.283) gives also a statistical number of maximal cliques of size $p$ for a random graph. Then, we conclude that:
$$
i_{max}(G) \approx i_B(G) = \sum_{p=1}^{n}\binom{n}{p}(1-d)^{\binom{p}{2}}
$$
with $n$ the number of vertices and $d$ the density of a random graph $G$. 

In this study, we use Cliquer~\footnote{Code available at: users.aalto.fi/~pat/cliquer.html. To count all IS of a graph, you just execute: ./cl <complement graph> -a -m 1 -M <k>}, an exact branch-and-bound algorithm developed by Patric \"{O}sterg\aa{}rd~\cite{Ostergard2002} that enumerates all cliques (an IS is a clique in the complementary graph). 

It is more complex to evaluate $\overline{\mathcal{N}}(G,k)$ and 
section~\ref{sec:ub} presents a way to build an \emph{experimental} 
upper bound of $\mathcal{N}(G,k)$.
We characterize this upper bound as \emph{experimental} because 
it is based on experimental tests on benchmark graph instances, 
then there is no total guaranty that it is an upper bound.

\subsection{Procedure}\label{sec:procedure}

We define here the procedure of what we call \emph{Optimality Clue} for graph coloring:
let $G$ a graph and $k>0$ a positive integer, that we suspect to be the chromatic number of $G$.
The proposed approach is based on the five following steps:
\begin{enumerate}
\item Build a sample of $t=1,000$ $k$-colorings of $G$: we run the memetic algorithm \emph{HEAD} 
 on $G$ as many times as needed to obtain $t$ legal $k$-colorings. Those solutions are \emph{the solutions sample}. The size of the sample is equal to $t$. We take in general case $t=1,000$ when it is possible.
\item Count the number of different $k$-colorings inside the sample. This number is equal to $p$. Of course $0\leq p\leq t$.
\item Estimate an upper bound of $\mathcal{N}(G,k)$ as $UB(p,t)$ (cf. Section~\ref{sec:ub}); this upper bound is function of $t$ and $p$.
\item Compute $i(G)$, the number of IS, or at least a lower bound if $i(G)>10^6$, with an exact algorithm (Cliquer). 
\item If $i(G) > UB(p,t)$, then we conclude that solutions of the sample have a \textbf{clue to be optimal}:
\begin{center}
Chances are that $k$ is equal to $\chi(G)$
\end{center}
\end{enumerate}

\subsubsection{Uniform sample}
If the sample is uniform\footnote{All $k$-colorings in the sample are uniformly drawn at random in $\Omega(G,k)$.}, then there exists statistical methods to count solutions and to build an upper bound with statistical guarantee, for example the capture-recapture methods: Peterson method~\cite{Krebs2009}, Jolly-Seber method~\cite{Baillargeon2007} wich is commonly used in ecology to estimate an animal population's size.
However, it is not our case: we have no guarantee that our solutions sample is uniform or near uniform.
HEAD is a memetic algorithm that explores the space of non-legal $k$-colorings: 
a non-legal $k$-coloring is a coloring with at most $k$ colors and where two adjacent vertices (linked by an edge) may have the same color (called conflicting edge). 
The objective of HEAD is to minimize the number of conflicting edges to zero, that is to get a legal $k$-coloring.
HEAD is an evolutionary algorithm with a population size equals to two. The two non-legal $k$-colorings perform at each generation a tabu search and after a crossover.
The sample distribution depends on the fitness landscape properties~\cite{Merz2000,Marmion2013}\footnote{The fitness landscape itself depends on the neighborhood used for tabu search and the crossover used.} and there is no reason for this distribution to be uniform.
A smooth landscape (respectively a rugged landscape) around a legal $k$-coloring will increase (resp. decrease) the probability of finding this $k$-coloring.
Figure~\ref{fig:ergodicity} represents the frequency of the 319 optimal $46$-colorings of <r140\_90.4> graph of RCBII benchmark (140 vertices and density 0.9) in a sample of size 100,000 found by HEAD heuristic. 
In this typical graph instance, 
the ratio between the least frequent and the most frequently found coloring is around a factor of $10^3$ which corresponds to the same scale as similar studies~\cite{Wei2005}.
\begin{figure}
 \begin{minipage}[b]{.48\linewidth}
  \centering\epsfig{figure=solutionFrequency.pdf,width=\linewidth}
\caption{\label{fig:ergodicity}Sampling of $46$-colorings for the <r140\_90.4> graph from RCBII benchmark (140 vertices and density 0.9).}
 \end{minipage} \hfill
 \begin{minipage}[b]{.48\linewidth}
  \centering\epsfig{figure=CollisionProbability.pdf,width=\linewidth}
\caption{\label{fig:CollisionProbability}Collision probability $q$, given the sample size $t$, and the total number of $k$-colorings $\mathcal{N}(G,k)$.}
 \end{minipage}
\end{figure}


Another approach is to take into account the ergodicity of an algorithm, which is its capability to explore all the search space.
More precisely, an algorithm is ergodic if it is possible (probability not null) to reach any $k$-coloring from any other $k$-coloring in a finite number of iterations. 
Random walks or Metropolis algorithms (with a positive temperature sufficiently high) are ergodic algorithms since there is always a finite probability of escaping from local minimum.
However, those algorithms are very inefficient in practice to find an optimal $k$-coloring in the general case.





\subsubsection{Sample size}
The choice of $t$, the size of the sample, is very important for two reasons.
First, in practice, to build a sample of $k$-colorings can be very time-consuming, 
then the size of the sample should have a reasonable size. We take $t=1,000$ for most of the graph instances.
However, the more challenging the graph instance, the longer HEAD takes to find one $k$-coloring.
Therefore, it is not possible to build a sample of size 1,000 for all graphs, such as for the <DSJC500.5> graph of DIMACS (cf. Table~\ref{tab:datasetDIMACS}).

The second reason is more theoretical.
We have limited the maximum number of different optimal solutions to 1 million, for a graph to be considered by our approach.
In fact, we choose 1 million because it equals to $t^2$ with $t=1,000$.
Indeed, if the sample is uniformly drawn at random in $\Omega(G,k)$,
the probability $q$ that at least two colorings of the sample are identical is equal 
to\footnote{This problem is linked to the birthday problem that shows
that in a room of just 23 people there's a 50-50 chance that two people have the same birthday.
In our case, the number of days in a year is $\mathcal{N}$ and 
the number of people is the size $t$ of the sample.}: 
$q=1-\frac{\mathcal{N}!}{\mathcal{N}^t(\mathcal{N}-t)!}\simeq1-e^{-\frac{t(t-1)}{2\mathcal{N}}}$ 
then $\mathcal{N}\simeq-\frac{t(t-1)}{2ln(1-q)}$. 
We call also $q$ the collision probability.
So, if $q=0.5$ then $\mathcal{N}\simeq720626$, if $q=0.393$ then $\mathcal{N}\sim t^2=10^6$.
Figure~\ref{fig:CollisionProbability} represents the collision frequency, $q$, in function of the sample size, $t$, for different values of the $\Omega(G,k)$ size.
When $\mathcal{N}(G,k)=10^5$ and $t=1,000$, it is almost impossible to miss a collision in the sample, but for $\mathcal{N}=10^6$, there is around 60\% to miss a collision.
However, it is not tragic to miss a collision for our approach. 
Indeed, the consequence is that the clue of optimality may be not applicable but the risk of \emph{false positive} is avoided. 
A false positive occurs if our procedure~\ref{sec:procedure} improperly indicates the optimality clue, when in reality the $k$-colorings are not optimal.
Moreover, the collision frequency is higher for a non-uniform sample than for a uniform one.

\section{Estimate of the number of $k$-colorings: $UB(G,k,p,t)$}\label{sec:ub}

\subsection{Data sets}






In order to define an estimator or at least an upper bound of the number of $k$-colorings, we need to have a large number of graph instances for which we know the exact number of $k$-colorings. Fabio Furini et al.~\cite{Furini2017} have published an open-source and very efficient version of the backtracking DSATUR algorithm~\cite{Brelaz79} which returns the chromatic number of a given graph~\footnote{Code available at: lamsade.dauphine.fr/coloring/doku.php}. DSATUR is one of the best exact algorithms for GCP, particularly for graphs with high density. We suggest readers interrested in an overview of exact methods for GCP to read~\cite{Malaguti2009,Held2012}. 

We modified their DSATUR algorithm in order to count the total number of $k$-colorings. The pseudo code of the algorithm, called CDSATUR, is presented in algorithm~\ref{algo:cdsatur}. CDSATUR returns, for all values $k$, the exact value of $\mathcal{N}(G,k)$ taking into account the permutation of colors and especially $\mathcal{N}(G,k=\chi(G))$. 

\begin{algorithm}[H]
 \DontPrintSemicolon
\Input{$G=(V,E)$ a graph and $k$ a positive integer.}
  $\mathcal{N}\leftarrow 0$\\
  $C[v]\leftarrow None,\,\forall v\in V$: $C$ is the empty coloring.\\
  $l\leftarrow0$: number of colors used by $C$.\\
  $CDSATUR(C,l)$\\
  \Return $\mathcal{N}$\\
  \SetKwFunction{FMain}{$CDSATUR$}
  \SetKwProg{Fn}{Procedure}{:}{}
  \Fn{\FMain{$C'$, $k'$}}{
  \uIf{all the vertices of $C'$ are colored}{
    \If{$k'\leq k$}{
      $\mathcal{N}\leftarrow \mathcal{N}+1$
    }
  }\Else{
    Select an uncolored vertex $v$ of $C'$\\
    \For{every feasible color $i\in \intervalleff{1}{k'+1}$}{
      $C''\leftarrow C'$, $C''[v]\leftarrow i$\\
      $k''\leftarrow max(k',i)$: number of colors used in $C''$.\\
      \If{$k''\leq k$}{
	$CDSATUR(C'',k'')$
      }
    }
  }
  
}
\caption{\label{algo:cdsatur} CDSATUR which returns the number of all $k$-colorings of $G$: $\mathcal{N}(G,k)$.}
\end{algorithm}

Fabio Furini et al. published also 2031 random GCP instances called RCBII~\footnote{Instances available in the same address} with vertices from 60 to 140 and density between 0.1 and 0.9. 
This wide variety of graphs is our reference dataset. We complete this dataset with easy DIMACS graphs~\cite{dimacs96} for which $\chi(G)$ and $\mathcal{N}(G,\chi)$ is computable with CDSATUR. 

The 2031 graphs of RCBII benchmark have characteristics described in Table~\ref{tab:dataset1}.
We can notice that $\chi(G)$ is known for all these graphs~\cite{Furini2017}. 
First we calculated $\mathcal{N}(G,\chi)$ with CDSATUR, with a time limit equals to $2400$s.
This time is enough for most of the graphs. 
There are only 210 graph instances of RBCII (on the 2031) for which CDSATUR does not have enough time to find $\mathcal{N}(G,\chi)$.
These 210 graphs are used to test our approach (\emph{test dataset}).  

Among the graphs for which $\mathcal{N}(G,\chi)$ can be determined, we consider only those with less than 1 million optimal solutions: they form the \emph{reference dataset} (959 graph instances).
Finally, we can distinct inside the reference dataset, graph instances verifying $i(G)>\mathcal{N}(G,\chi)$ (566/959) or not (393/959). 

It remains 862 graphs on the 2031 of RBCII benchmark with more than 1 million of optimal solutions. We decided to test our approach on those graphs (called \emph{control dataset}) to check if the proposed algorithm can produce \emph{false positives} or not.

\begin{table*}[htb]
\caption{\label{tab:dataset1}Distribution of 2031 RCBII graph instances}
    \begin{center}
\small
	\scriptsize
	\begin{tabular}{cc|cc|cc|cc||c}
\toprule 
 \multicolumn{8}{c||}{$\chi$ known}  & Total\\
\multicolumn{2}{c|}{$\mathcal{N}>10^6$ } & \multicolumn{4}{c|}{$\mathcal{N}\leq 10^6$} & \multicolumn{2}{c||}{$\mathcal{N}$ ?} & \string#instances\\
\midrule
\multicolumn{2}{c|}{862 (control dataset)}&  \multicolumn{4}{c|}{959 (reference dataset)}&  \multicolumn{2}{c||}{210 (test dataset)}& 2031\\
\multicolumn{2}{c|}{}&  \multicolumn{2}{c}{$i(G)\leq\mathcal{N}$}&\multicolumn{2}{c|}{$i(G)>\mathcal{N}$}&&& \\
\multicolumn{2}{c|}{}&  \multicolumn{2}{c}{393}&\multicolumn{2}{c|}{566}& & & \\
opt. clue&not opt. clue&opt. clue&not opt. clue&opt. clue&not opt. clue&opt. clue&not opt. clue&\\
0&862& 0& 393&449&117& 39&171& \\
\bottomrule
	\end{tabular}
\end{center}
\end{table*}

\subsection{Analysis of graph instances}
Before determining an upper bound of $\mathcal{N}(G,\chi)$, 
we investigate the possible links between standard features of a graph as its size (number of vertices), its density, 
or its chromatic number and the number of optimal colorings:  $\mathcal{N}(G,\chi)$

\subsubsection{Links between $\mathcal{N}(G,\chi)$, graph size and density} 


Graphs with same size (number of vertices) and same density can have a number of optimal colorings very different from one another.
A typical example is given in Figure~\ref{fig:histo} where is represented the distribution of 49 graph instances with 80 vertices and density 0.3 (<r80\_30.*> of RBCII benchmark) in function of the number of solutions $\mathcal{N}(G,\chi)$.
Half of the graphs (25/49) have less than 100 000 optimal solutions while a third  (18/49) have more than 1 million  optimal solutions. 
There are no simple law that characterize this distribution. 
\begin{figure}
 \begin{minipage}[b]{.48\linewidth}
  \centering\epsfig{figure=histogram.pdf,width=0.9\linewidth}
\caption{\label{fig:histo}Histogram characterizing the distribution of the 49 random graphs with 80 vertices and a density of 0.3 (<r80\_30.*> for the RBCII benchmark) given the number of optimal colorings, $\mathcal{N}(G,\chi)$.}
 \end{minipage} \hfill
 \begin{minipage}[b]{.48\linewidth}
  \centering\epsfig{figure=impactDensity.pdf,width=0.9\linewidth}
\caption{\label{fig:impactdensity}Proportion of graphs with 70 vertices (<r70\_*.*> of RBCII benchmark) having more than 1 million of optimal colorings given the density.}
 \end{minipage}
\end{figure}

However, we can notice that the lower the density, the higher the optimal solution number. Indeed, Figure~\ref{fig:impactdensity} presents the proportion of graphs with 70 vertices of RBCII benchmark having more than 1 million colorings depending on graph density. For a low density such as 0.1, nearly all graphs have more than 1 million optimal solutions, while no graph with high density (equals to 0.9).

In order to have a more fine view of the link between the number of optimal colorings and the graph density, 
we generated 1,000 random graphs with 50 vertices and density $d$ ($d=0.1$, $0.2,...,$ or $0.9$).
Each line in Figure~\ref{fig:identity} represents (for each density) the proportion of graphs having less than $n$ optimal colorings with $n$ between $10^2$ and $10^6$.
Pink line of Figure~\ref{fig:identity} shows for example that 50\% of graphs (with 50 vertices and density = 0.3) have less than $10^5$ optimal colorings.
The plots are quite similar for graphs with 60 or 70 vertices. 
The graph size seams to have a slight influence on the number of optimal solutions.
\begin{figure}
 \begin{minipage}[b]{.48\linewidth}
  \centering\epsfig{figure=catreIdentite-n50-2.pdf,width=\linewidth}
\caption{\label{fig:identity}Proportion of random graphs (with 50 vertices and a given density) having less than $n$ optimal colorings with $n$ range between 1 hundred to 1 million.}
 \end{minipage} \hfill
 \begin{minipage}[b]{.48\linewidth}
  \centering\epsfig{figure=link_chi_n.pdf,width=\linewidth}
\caption{\label{fig:link_chi_n}Each dot corresponds to one graph of RCBII. 
The abscissa is the number of optimal colorings and the ordinate is the chromatic number.}
 \end{minipage}
\end{figure}


\subsubsection{Links between $\mathcal{N}(G,\chi)$ and $\chi(G)$} 

As shown in Figure~\ref{fig:link_chi_n}, there is no obvious link between the chromatic number of a graph, 
$\chi(G)$ ($y$-axis) and the number of optimal colorings $\mathcal{N}(G,\chi)$ ($x$-axis). 
Each dot of Figure~\ref{fig:link_chi_n} corresponds to one graph of RCBII for which it is possible to calculate exactly $\mathcal{N}(G,\chi)$ with CDSATUR.

\subsection{Upper bound function}
\label{sect:upperbound}
We define in this Section an upper bound of $\mathcal{N}(G,k)$ based on the 953 graphs of the reference dataset.
Suppose we have, for a given graph $G$, a set of $n$ different $k$-colorings: 
$\Omega(G,k)=\{x_1,...,x_n\}$, i.e. $n=|\Omega(G,k)|=\mathcal{N}(G,k)$ is unknown. 
We also have a sequence $W$ of $t$ independent samples: $W=(w_1,...,w_t)$, where $w_k\in \Omega(G,k),\ \forall k=1...t$.
This sample $W$ is composed 
of $t$ independent success runs of \emph{HEAD} algorithm. 
We note $\forall j=1...n,\ \string#(x_j)$ the count of $x_j$ in $W$.
For these $t$ colorings, we count $p$ different colorings in $W$: $p=|\{x_j\in W,\ \string#(x_j)>0\}|$. 
So then, $\mathcal{N}(G,k)\geq p$ and $t\geq p\geq 1$. 
Figures~\ref{fig:different_colorings} represent for each graph of the reference dataset, 
the number of \textbf{different} colorings $p$ found by \emph{HEAD} 
on the total of $t=1,000$ success runs (in abscissa) and the exact number of colorings, $\mathcal{N}(G,k)$, 
calculated with CDSATUR (in ordinate). 
Each dot corresponds to one graph of the reference dataset. 
The objective now is to determine an upper bound of $\mathcal{N}(G,k)$, $UB$, as small as possible. 
Indeed, in order to apply the Theorem~\ref{thm:opt}, we must have $i(G)-k>UB$.

\begin{figure}
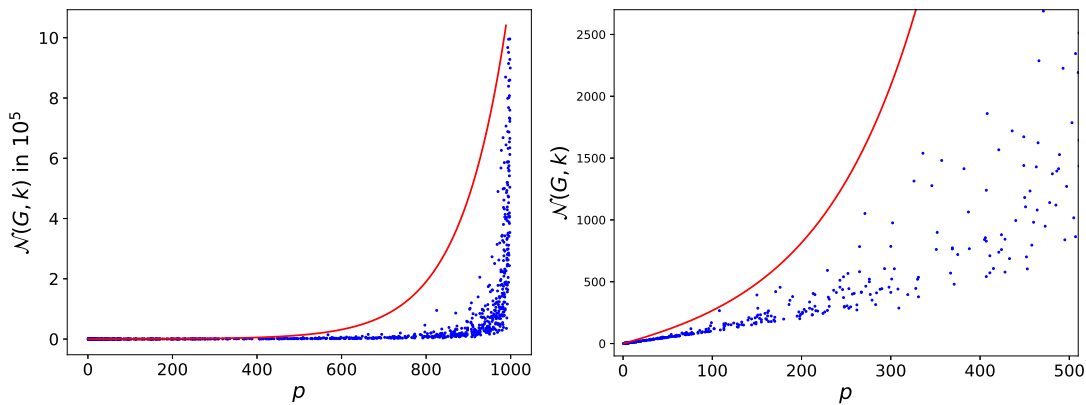

\begin{center}
\includegraphics[width=0.45\linewidth]{nbSol_nbSolDiff2.pdf}
\includegraphics[width=0.45\linewidth]{nbSol_nbSolDiff_zoom2.pdf}
\caption{\label{fig:different_colorings}Each blue dot corresponds to one of the 959 graph instances 
that have less than 1 million optimal colorings (reference dataset). 
The number of optimal colorings (calculated with CDSATUR algorithm) is in ordinate. 
The number $p$ of different optimal solutions found by our \emph{HEAD} algorithm after 1,000 success runs is in abscissa. 
The red line is an upper bound $UB(G,k,p,t)$ of the number of optimal colorings.
The right figure is a zoom of left figure for $p\leq500$.}
\end{center}
\end{figure} 

Figure~\ref{fig:different_colorings}-right which is a zoom of the left figure for $p\leq500$ shows that 
for $p\ll t$, $p$ is near linear to $\mathcal{N}(G,k)$: $p\sim\mathcal{N}(G,k)$. 
Then, $p$ is a good candidate to be an estimator of $\mathcal{N}(G,k)$.
When $p$ is near to $t$, the range of $\mathcal{N}(G,k)$ values is very large, near to $p^2=10^6$, 
and $p$ is a very bad estimation of $\mathcal{N}(G,k)$, but notice that $\mathcal{N}(G,k)<p^2$.
We add on those figures a red line that represents a possible upper bound of $\mathcal{N}(G,k)$ which is equal to:
\begin{equation}\label{eq:upperbound}
UB(G,k,p,t)=\left\{\begin{array}{ll}
p+p^{\alpha\frac{t+p}{t}}&\text{if } p<t \times 0.99\\
+\infty&\text{otherwise}
\end{array}\right.
\end{equation}
with $\alpha=1.01$.
Indeed, when $p\ll t$, $UB(G,k,p,t)\sim2p$ and when $p$ is near to $t$, $UB(G,k,p,t)\sim p^2$. 
Between these extreme values, the cloud of blue dots follows approximately an exponential curve.  
$UB(G,k,p,t)$ was also built to be above all blue dots; i.e. it is a valid upper bound for all graphs of the reference dataset. 
Of course, there is no guarantee that this upper bound is still valid for all other graphs. 
So, our approach is never able to prove optimality in a strict sense.
It gives only a clue.


\section{Experiments and analysis}\label{sec:test}

\subsection{Tests}
The upper bound $UB$ was built based on the graphs of the \emph{reference dataset}. 
Now, in order to test the \emph{optimality clue} (procedure Section~\ref{sec:procedure}), 
we use this upper bound on graphs of the \emph{test dataset} 
and the \emph{control dataset} and for some graphs coming from the DIMACS benchmark.


Results on RCBII benchmark are presented in Table~\ref{tab:dataset1} in the two last lines. 
The first column concerns the 862 graphs with more than 1 million optimal solutions, corresponding to the \emph{control dataset}.
There is no \emph{false positive}: the procedure~\ref{sec:procedure} concludes for all the graphs that there is no optimality clue.
The two following columns concern the \emph{reference dataset}.
More precisely, the second column concerns graphs having less than 1 million optimal solutions but 
that do not verify Theorem~\ref{thm:opt}: the number of IS is lower than the number of optimal solutions. 
Of course, there are no \emph{false positives} for this case, because $UB$ was built to validate those graphs (\emph{reference dataset}).
The third column concerns the 566 graphs verifying  the Theorem~\ref{thm:opt}.
The \emph{optimality clue} is proven for 449 of them because $i(G)>UB(G,k,p,t)>\mathcal{N}(G,k)$. 
The optimality clue is not shown on the 117 ($=566-449$) other graphs because $UB(G,k,p,t)\geq i(G)>\mathcal{N}(G,k)$. 
$UB(G,k,p,t)$ is an upper bound too high in this case. 
To prove the optimality clue on those graphs, we would have to increase the size of the solutions sample, $t$. 
The fourth column concerns the \emph{test dataset} i.e. graphs for which the number of optimal solutions is unknown. 
We prove the \emph{optimality clue} for nearly 20\% of these graphs (39/210). 
There are three reasons why we did not prove the optimality clue for the other 171 (=210-39) graphs: 
1) graph instances have more than 1 million solutions; 
2) graph instances do not verify the Theorem~\ref{thm:opt}; 
Nothing can be done for these two first reasons.
3) $p$ is too close from $t$, then the upper bound $UB$ is too high. 
In order to have an upper bound more accurate,
i.e. still valid but not too high, 
we have to increase the size of the sample or to choose another formula than equation~(\ref{eq:upperbound}).
Our approach therefore applies to about 20\% of the random graphs in the RCBII benchmark.
For control and reference datasets, we get more or less the same proportion: 25\% (449/1821).

\begin{table*}[htb]
\caption{\label{tab:datasetDIMACS}Results of \emph{optimality clue} tests for graphs of DIMACS benchmark with $p<t$.}
\small
\tiny
\hspace*{-1.2cm}
    \begin{tabular}{cccccccccccc||cr}
\toprule 
Instances & $|V|$ & $d$& $\chi(G)$&$k$&  $i(G)$& $\mathcal{N}(G,k)$ &$t$& $p$ & $UB(G,k,p,t)$ &Opt. clue& time (s) & $\underline{\chi}(G)$& time(s)\cite{Held2012}\\
\midrule
\bottomrule

DSJC125.5 & 125 & 0.5 & 17 & 17 & 537,508 & ? & 1,000 & 767 & 141,503 & \textbf{True}&161&17&274\\
DSJC125.9 & 125 & 0.9 & 44 & 44 & 1,249 & ? & 1,000 & 998 & $+\infty$ & False&28&44&7\\
DSJC250.9 & 250 & 0.9 & 72 & 72 & 6,555 & ? & 1,000 & 889 & 423,733 & False&1,963&72&11,094\\
flat1000\_50\_0 & 1,000 & 0.49 & 50 & 50 & $>10^7$ & ? & 1,000 & 1 & 2 & \textbf{True}&25,694&50&3,331\\
flat1000\_60\_0 & 1,000 & 0.49 & 60 & 60 & $>10^7$ & ? & 1,000 & 1 & 2 & \textbf{True}&44,315&60&29,996\\
le450\_5a & 450 & 0.06 & 5 & 5 & $>10^7$ & 32 & 1,000 & 32 & 69 & \textbf{True}&60&5&<0.1\cite{Li2013}\\
le450\_5b & 450 & 0.06 & 5 & 5 & $>10^7$ & 1 & 1,000 & 1 & 2 & \textbf{True}&138&5&<0.1\cite{Li2013}\\
le450\_5c & 450 & 0.1 & 5 & 5 & $>10^7$ & 1 & 1,000 & 1 & 2 & \textbf{True}&28&5&<0.1\cite{Li2013}\\
le450\_5d & 450 & 0.1 & 5 & 5 & $>10^7$ & 8 & 1,000 & 8 & 16 & \textbf{True}&20&5&<0.1\cite{Li2013}\\
le450\_15c & 450 & 0.17 & 15 & 15 & $>10^7$ & ? & 1,000 & 919 & 554,866 & \textbf{True}&&15&<0.1\cite{Li2013}\\
le450\_15d & 450 & 0.17 & 15 & 15 & $>10^7$ & ? & 1,000 & 579 & 26,041 & \textbf{True}&&15&<0.1\cite{Li2013}\\
myciel3 & 11 & 0.36 & 4 & 4 & 102 & 520 & 1,000 & 435 & 7,105 & False&10&4&<0.1\\
queen5\_5 & 25 & 0.53 & 5 & 5 & 461 & 2 & 1,000 & 2 & 4 & \textbf{True}&9&5&<0.1\cite{Li2013}\\
queen6\_6 & 36 & 0.46 & 7 & 7 & 2,634 & 20 & 1,000 & 20 & 42 & \textbf{True}&10&7&<0.1\\
queen7\_7 & 49 & 0.4 & 7 & 7 & 16,869 & 4 & 1,000 & 4 & 8 & \textbf{True}&10&7&<0.1\cite{Li2013}\\
queen8\_8 & 64 & 0.36 & 9 & 9 & 118,968 & $>$154,068 & 1,000 & 993 & $+\infty$ & False&11&9&<1\\
r125.1c & 125 & 0.97 & 46 & 46 & 787 & ? & 1,000 & 977 & 934,514& False&5,962&46&<0.1\cite{Li2013}\\\hline
DSJC250.5 & 250 & 0.5 & ? & 28 & 24,791,612 & ? & 1,000 & 999 & $+\infty$ & False&1,696&26&18\\
DSJC500.5 & 500 & 0.5 & ? & 47 & $>10^7$ & ? & 341 & 281 & 32,731 & \textbf{True}&out of time&43&439\\
          &     &     &   & 48 &             & ? & 100,000 & 100,000 & $+\infty$ & False&&&\\
DSJC500.9 & 500 & 0.9 & ? & 126 & 35,165 & ? & 1,000 & 927 & 59,623 & False& 234,496 &123&100\\
DSJC+300.1\_8 & 300 & 0.1 & ? & 8 & $>10^7$ & ? & 1,000 & 3 & 6 & \textbf{True} & 22,896 &5&<0.1\cite{Li2013}\\
DSJC+300.5\_31 & 300 & 0.5 & ? & 31 & $>10^7$ & ? & 1,000 & 2 & 4 & \textbf{True} & 69,363 &29&20\\
DSJC+400.5\_39 & 400 & 0.5 & ? & 39 & $>10^7$  & ? & 1,000 & 96 & 252 & \textbf{True} &  386,037 &36&135\\

	\end{tabular}
\end{table*}

The results on selected DIMACS benchmark graphs are presented in Table~\ref{tab:datasetDIMACS}. 
We only present graph instances for which the solutions sample generated by \emph{HEAD}, are not all different (i.e. $p<t=1,000$) 
and are susceptible to be optimal. 
The first column of Table~\ref{tab:datasetDIMACS} indicates the name of the graph instance. 
Columns 2-7 indicate for each graph, its number of vertices $|V|$, its density 
$d$, its chromatic number $\chi(G)$, when it is known, the number of colors $k$, 
used for the test ($k=\chi(G)$ if $\chi(G)$ is known), its number of independent sets 
$i(G)$, and the exact number of legal $k$-colorings $\mathcal{N}(G,k)$, when it is possible to calculate it with CDSATUR. 
Then, columns 8-10 indicate the size of the solutions sample $t$, 
the number of different solutions in the sample $p$, 
the experimental upper bound of the number of $k$-colorings $UB(G,k,t,p)$. 
Columns 11 and 12 provide the result of the procedure of Section~\ref{sec:procedure} of \emph{optimality clue} and the total computation time in seconds to generate all the sample.
The two last columns indicates the lower bound of $\chi(G)$ found by the best known exact method~\cite{Held2012} 
(or by IncMaxCLQ~\cite{Li2013}, that found the maximum clique) and the computation time of this method.

The first part of Table~\ref{tab:datasetDIMACS} corresponds to 17 graphs for which $\chi(G)$ is already known by other methods. We prove the \emph{optimality clue} for 12 of them. Computation time of optimality clue is higher than those for finding the upper bound with the exact methods except for two graphs. However the computation time of optimality clue can be  considerably reduced because the 1,000 runs of HEAD can be switch on 2, 3... or 1,000 different processors or on a cluster of computers.

The second part of Table~\ref{tab:datasetDIMACS} (below the horizontal line) corresponds to 6 graphs for which $\chi(G)$ is unknown. We have generated 3 new graph instances called <DSJC+*.*\_k> with rules almost similar of those of <DSJC*.*> but for which a $k$-coloring is hidden and so that very few others $k$-colorings can exist.
For <DSJC500.5> which is one of the most challenging graph of DIMACS, 
we have the \emph{optimality clue} for $k=47$\footnote{For <DSJC500.5> the computation time is not report because it takes several weeks and no accurate time has been recorded.}.

Notice that when the optimality clue is proven for a given $k$, 
we check that the optimality clue for $k+1$ can not be proved as well.
For example, for <DSJC500.5>, we check the optimality clue for $k=48$ is not achieved: 
for $t=100, 000$ $48$-colorings found by HEAD, all are different. 
If the randomized algorithm \emph{HEAD} is biased, 
i.e. finds $k$-colorings always in the same subset of $\Omega(G,k)$ 
(and therefore undervalues the upper bound), 
this bias does not reveal for $k+1$. 

Notice that the impact of the size of the sample has a great importance on the test.
For having a not too high upper bound of  $\mathcal{N}(G,k)$, we should have $p\ll t$. 
So, the size $t$ of the solutions sample can be chosen in function of the results of each graph instance. 
For example, we can extend the sample until all colorings of the sample are found at least twice (cf. Good-Turing estimator). 




\section{Conclusions and perspectives}\label{sec:conclusion}

Based on Theorem~\ref{thm:opt}, we propose a procedure, called \emph{optimality clue}, for determining if the global optimum is reached or not, by a heuristic method.
This approach estimates an upper bound of the number of legal $k$-colorings by running a randomized heuristic several times. 
This process is contextual to the instance to solve. No general conclusion can be drawn on the heuristic itself, which is used to build solutions. 
This definition can be seen as an experimental criterion that evaluates the convergence of a randomized algorithm 
to the chromatic number. 
However, since it is not possible to be sure that the upper bound is exact, 
it is not possible to prove optimality in the strict sense. 

Our approach is nevertheless an alternative when the exact methods are not applicable (high optimality gap).
It is a new way for providing a criterion on the \emph{proximity} of the optimality.
The general idea is that the number of solutions with a same objective function value decrease when the objective function is getting closer to the optimal value.
\emph{Optimality clue} matches with the standard definition of optimality on a wide number of instances of DIMACS and RBCII benchmarks where the optimality is known. 
Furthermore, we proved the \emph{optimality clue} for <DSJC500.5> graph of DIMACS with $k=47$ colors 
which is a very challenging instance (only two algorithms are able to find $47$-colorings~\cite{Titiloye12,Moalic2018}).
Tests on small random graphs (under 140 vertices) show that \emph{optimality clue} can be proved for 20\% of them.

Finally, we defined an upper bound quite high to avoid \emph{false positives}: 
graphs for which we prove the optimality clue for a given $k$, while $\chi(G)\neq k$.\footnote{In this context, we propose on our website a challenge to find a counterexample (false positive graph)}.


\paragraph{Representative sample} The proposed approach is based on a sampling of the legal $k$-colorings space, $\Omega(G,k)$.
This sampling is built by running many times \emph{HEAD} algorithm; each success run providing one element of the sample.
Ideally, to obtain a representative sample, 
\emph{HEAD} has to uniformly draw one $k$-coloring inside the legal $k$-colorings space. 
Of course, it is not possible to guarantee this feature in all cases, 
it is why we built an upper bound function (Equation~\ref{eq:upperbound}) of $\mathcal{N}(G,k)=|\Omega(G,k)|$. 
In order to improve our approach and to get closer to the ergodic objective, 
we plan to use more powerful model counting such as presented in~\cite{Gomes2009,Ermon2012} 
and study the ergodic propriety of \emph{HEAD}. 
Our work is only an initial contribution to the study of optimality by counting.
Other methods of estimating the population should be tested, 
such as Good-Turing methods that estimate missing mass (i.e. missing $k$-colorings in the sample) 
or Peterson-type methods to obtain statistical guarantees.
\paragraph{Generalization to other optimization problems}
All tests presented in this paper are done on GCP, which has the special propriety of Lemma~\ref{lemme:iG}. 
Then, it is possible to generalize our approach to other problems, as soon as they have an analogue propriety of Lemma~\ref{lemme:iG}. 
This is the case for the maximum clique problem.

\newpage
\clearpage
\section*{Appendix: Generalization to other optimization problems}

Here is a presentation of the conditions to be fulfilled for a generalization of the proposed approach to other optimization problems.
Then, we recall some basic notations. 
An optimization problem can always be modeled as follows:
\begin{equation*}
\langle f,X\rangle\left\{
\begin{array}{rl}
\displaystyle \text{Minimize}   & f(x)  \\[1mm]
\text{s.c.}                      & x\in X
\end{array}
\right. 
\end{equation*}
where $X$ is the set of admissible solutions, $x$ the decision variable and $f:X\rightarrow\mathbb{R}$ is the objective function. We note $\langle f,X\rangle$ this problem, $\mathcal{S}(\langle f,X\rangle,k)=\{x\in X | f(x)=k\}$ the set of admissible solutions with an objective function value equal to $k$ and $\mathcal{N}(\langle f,X\rangle,k)=|\mathcal{S}(\langle f,X\rangle,k)|$ the size of this set, i.e. the number of admissible solutions with objective function value equal to $k$.
For the GCP, $X$ corresponds to the set of legal colorings and $f$ provides the number of colors used.

\begin{Definition}\label{def:exeriment_opt}
Knowing one instance of an optimization problem (minimization case) $\langle f,X\rangle$ and $k$ a real number. If:
\begin{enumerate}
 \item $(x_1,...x_t)$ are $t$ admissible solutions 
 drawn independently from $\mathcal{S}(\langle f,X\rangle,k)$; it is a sample of $\mathcal{S}(\langle f,X\rangle,k)$,
 \item $\overline{\mathcal{N}}$ is an estimator of the upper bound of $\mathcal{N}(\langle f,X\rangle,k)$, the number of different solutions with value equal to $k$;
  this estimator is based on the sample  $(x_1,...x_t)$.
 \item $lb>0$ is a positive integer such as if $\exists\varepsilon>0,\,\mathcal{S}(\langle f,X\rangle,k-\varepsilon)\neq\emptyset$ then $\mathcal{N}(\langle f,X\rangle,k)\geq lb$; 
 that is to say, if there exists at least one solution of the optimization problem $\langle f,X\rangle$ with better objective function than $k$, then the number of solutions of $\mathcal{S}(\langle f,X\rangle,k)$ is higher than $lb$. This is analogous to Lemma~\ref{lemme:iG} for GCP.
 \item  $lb>\overline{\mathcal{N}}$
\end{enumerate}
then we say that solutions of $\mathcal{S}(\langle f,X\rangle,k)$ have a \textbf{optimality clue} of the optimization problem $\langle f,X\rangle$ relative to the upper bound $\overline{\mathcal{N}}$.
\end{Definition}
%


To estimate $\overline{\mathcal{N}}$, it is possible to adopt the same method as in Section~\ref{sec:test}. However, defining $lb$ can be quite complex or impossible depending of the problem. For example, for the Traveling Salesman Problem (TSP), if it exits a solution with a length equal to $n-\epsilon$ then, how is it possible to estimate the number of solutions with length equals to $n$? 

Optimaly clue can be easyly tested for example on Maximum Clique Problem (MCP). 
If we suspects that the maximum clique of a given graph is $\gamma(G)=29$.
A simple lower bound can be determinated.
Indeed, if a graph has an unique maximum clique of size 30, for example, then there exists at least 29 cliques with size 29, i.e.~: $lb = \gamma(G)=29$. 
This value may be small but the optimality is proved if $lb>\mathcal{N}$ with $\mathcal{N}$ the total number of clique of size 29. 

\bibliographystyle{splncs04}
\bibliographystyle{elsarticle-num}
\bibliography{gcp_bib,counting_bib}

\end{document}